\newtheorem{thm}{Theorem}
\date{February 28, 2011}
\author{Marek Jawurek, Martin Johns, Florian Kerschbaum\\firstname.lastname@sap.com} \title{Plug-in privacy for Smart Metering billing} 
\institute{SAP Research}
\begin{document} \maketitle

\begin{abstract} 
Traditional electricity meters are replaced by Smart Meters in customers' households.
Smart Meters collects fine-grained utility consumption profiles from customers, which in turn 
enables the introduction of dynamic, time-of-use tariffs.
However, the fine-grained usage data that is compiled in this process also allows to infer the inhabitant's personal schedules and habits. 
We propose a privacy-preserving protocol that enables billing with time-of-use tariffs without disclosing the actual consumption profile to the supplier.
Our approach relies on a zero-knowledge proof based on Pedersen Commitments performed by a plug-in privacy component that is put into the communication link between Smart Meter and supplier's back-end system.
We require no changes to the Smart Meter hardware and only small changes to the software of Smart Meter and back-end system.
In this paper we describe the functional and privacy requirements, the specification and security proof of our solution and give a performance evaluation of a prototypical implementation. 
\end{abstract}


\section{Introduction}
\label{introduction}

\subsection{Motivation}
Smart Meter roll-outs have begun all over the world~\cite{smartmetermap}.
Smart Meters record a fine-grained consumption profile of a certain service (electricity, heat or water) and report it to the supplier of the service who bills the customer accordingly.
Traditionally only a single, compiled value for a whole reporting period  has been reported to the supplier (e.g., the total consumed electrical energy of the last year). In contrast, Smart Meters transmit a detailed set of many data points which document consumption for short time intervals (e.g. every 15 minutes). This enables the suppliers to introduce more dynamic pricing schemes and to collect precise data about their customer base's usage patterns. 

Besides the technical motivation, also legal reasons come into play in respect to the current push towards Smart Metering: 
For instance, in Germany starting October 2010 suppliers must offer either time or load dependent tariffs (see §40~\cite{engw}).
These tariffs necessarily require Smart Meters with fine-grained consumption recording. 

However, such detailed data has privacy implications: 
A listening third party, the supplier or even an employee of the supplier could learn the consumption behavior of a customer and might use this information maliciously for other purposes than intended (e.g, to learn behavioral patterns, such as sleep/wake cycles or vacation time, of a given customer based on his energy usage).
Recently, customers have become aware of the potential privacy implications of such consumption profiles.
In the Netherlands Smart Meter roll-outs have been stopped because of the public outcry about the invasion of customer privacy~\cite{netherlandsstop}.

Grid operators and suppliers now face a dilemma:
On the one hand, they need to implement Smart Metering for legal and technical reasons.
But, on the other hand, they face on-going problems in respect to public acceptance of the technology due to the outlined privacy problems. 

\subsection{Our solution}

We provide a solution to this conflict by introducing a new consumption profile reporting protocol for time-of-use tariffs. 
We introduce a plug-in privacy component into the standardized Smart Meter / Meter Data Management (MDM) reporting communication link.
This component hides the actual consumption profile from the MDM and therefore also from the supplier.
We require only small changes compared to current Smart Meter reporting.
The plug-in privacy component intercepts Smart Meter readings, then uses tariff information provided externally (over the Internet or by the MDM) to calculate the billing amount and sends only the resulting billing amount to the MDM.
A Zero-Knowledge Proof ensures the correctness of the calculation.

The advantages of our approach are the following:
\begin{enumerate}
\item 
The Smart Meter's hardware complexity remains the same, because all calculations are conducted by the stand-alone plug-in component.
Such a plug-in component can be realized by off-the-shelf computing hardware like a router or Wifi access point or even by software running on a standard personal computer.
\item
The supplier does not have to trust the plug-in privacy component.
The privacy component's output suffices to check whether it calculated the final billing amount honestly and correctly, i.e. based on the correct tariff and on the correct readings provided by the Smart Meter. Therefore the privacy component does not require hardware-protected components and can be quite simple and cheap. 
The correct operation of the privacy component can be verified only by its output.

\item
Plaintext, fine-grained consumption profiles never even leave the household, if a privacy component is used.
This prevents any abuse of this data, either by intercepting it in transit, by leakage in the MDM systems or by the MDM's operator himself.
It also spares the MDM expensive security measures for the protection of the massive amount of privacy related data -- the consumption profiles of his customers.
\end{enumerate}

\subsection{Paper outline}

The remainder of this paper is structured as follows:
Section ~\ref{motivation} motivates the problem, gives a short introduction into Smart Metering and its privacy problems and defines our problem statement.
In Section ~\ref{cryptoprimitives} we describe the underlying cryptographic method of our solution before we explain the setup of our solution, the specification of the protocol and its security analysis in Section ~\ref{sec:private_billing_protocol}.
We evaluate a prototypical implementation in Section ~\ref{evaluation}.
Furthermore we show how our protocol might fit into existing Smart Meter communication protocols and how it fulfills the stakeholder requirements.
Finally, we give an overview of related work in Section ~\ref{relatedwork}, provide an outlook on future work in Section ~\ref{futurework} and conclude with a summary in Section ~\ref{conclusion}. 

\section{Smart Metering's implications for Privacy}
\label{motivation}

\subsection{Naming conventions}

Before we explore the Smart Meter billing process and deduct its privacy implications,
we briefly specify the terms used in the rest of this paper:

\begin{description}

 \item [Customer:] The term "customer" represents the household, family or person that receives the service from a supplier.

 \item[Supplier:] The term "supplier" stands as placeholder for all companies that cooperate in order to provide the service to customers and also want to
subsequently invoice the customers for this service. 

 \item[Consumption profile:] The term "consumption profile" stands for the consumption data collected by Smart Meters for service in a certain interval over a certain period of time.
This is applicable to many utilities (electricity, water, heat, gas, etc.). 

 \item[Back-end system:] Usually, the Smart Meter is directly connected to a MDM which just collects consumption data.
Tariffs, are then applied in the supplier's billing systems where the data is subsequently transported to. In this paper, "back-end system" (BS) stands for the collection of all IT-systems that collect consumption profiles and use them to calculate the invoice for the customer based on tariffs.

 \item[Tariff:] The term "tariff" stands for the price schema, i.e., the price of service consumption at a specific interval.
In the following we restrict ourselves to a time-of-use pricing scheme, but our protocol could also handle load-dependent billing with little modification.

\end{description}

\subsection{Smart metering billing}

Smart Metering refers to the collection of consumption profiles at customer's households with the help of so called Smart Meters (SM).
Smart Meters measure electricity consumption in households and communicate their readings at regular intervals to the back-end system.
Alternatively, the  back-end system can also query the Smart Meter for its data (pull).
A Trusted Platform Module (TPM) in the Smart Meter holds key material and creates signatures over the data to ensure authenticity and integrity until it arrives at the back-end system.
There the consumption profile and the tariff data from the respective customer's contract are used to calculated the price the customer has to pay for the time period covered by the profile. 
Figure ~\ref{fig:traditionalsetup} displays the usual Smart Meter setup.

\begin{figure}
\includegraphics[width=\textwidth]{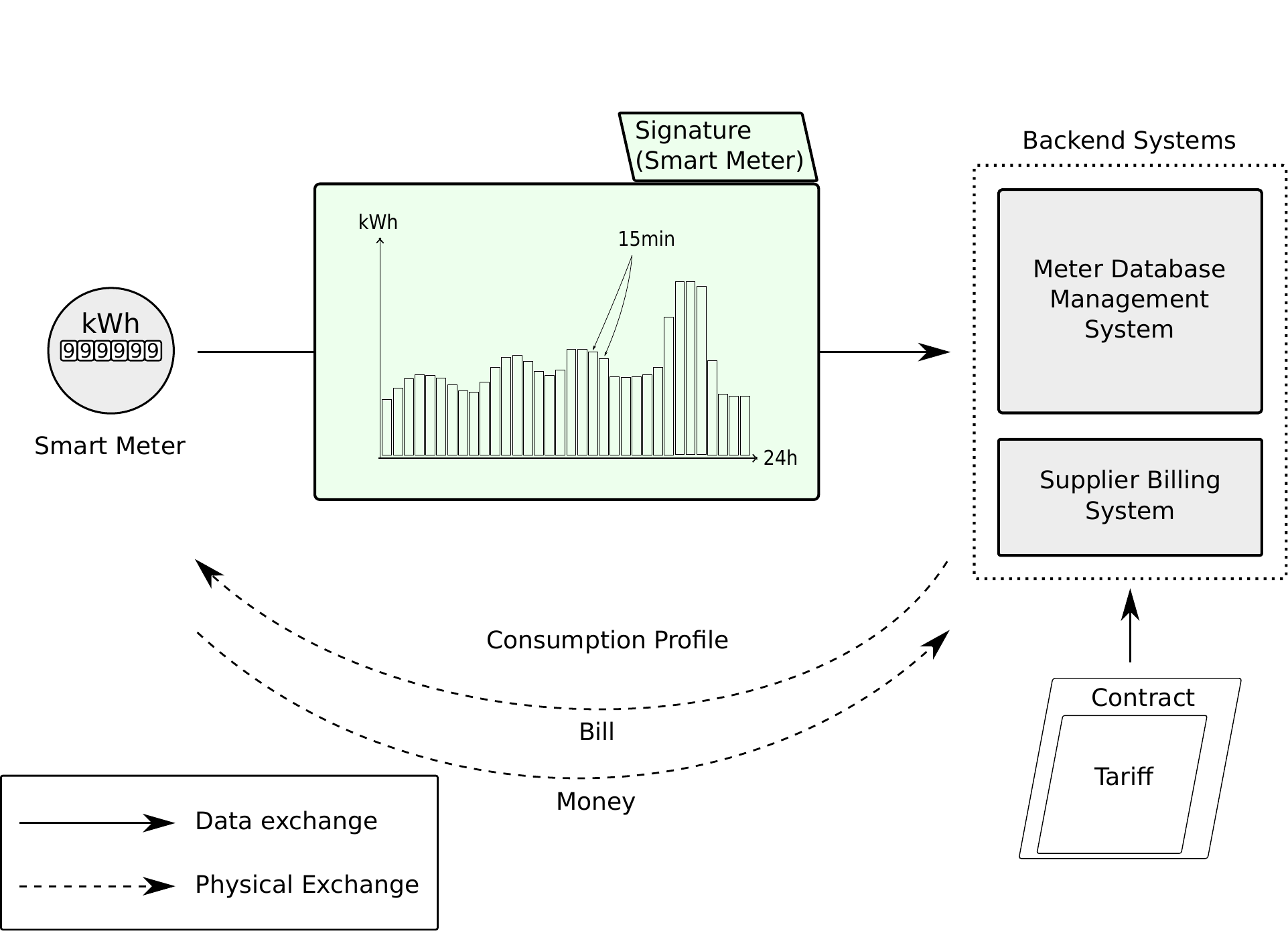}
\caption{Traditional setup of Smart Meter and back-end system}
\label{fig:traditionalsetup}
\end{figure}

\subsection{Privacy concerns}
\label{privacyconcerns}
Smart Metering has encountered massive privacy concerns from media~\cite{ukprivacy}, data privacy experts~\cite{smartgridprivacy} and consumers~\cite{netherlandsstop}. 
The fact that whole consumption profiles of households are transmitted to and stored by suppliers is troubling w.r.t. customer privacy.
Data confidentiality can be easily protected in transit between SM and BS.
However, their storage at the suppliers' IT-systems still endangers customer privacy.
Depending on resolution and the availability of different services' profiles (e.g. water, heat, electricity) one can read the profile and "see" more or less clearly what happens in the household:
For instance, when family members wake up (light switched on), whether they shower in the morning (water, heat, and electricity for water heater), whether they drink hot beverages with their breakfast and when or if they leave for work or school.
Furthermore, the frequency of washing and drying clothes, cooking or the amount of time the TV is turned on can be inferred.
For further research on what electricity consumption profiles tell about the inhabitants see~\cite{appliancemonitoring} or~\cite{usemodekitchen}.

These inferences make consumption profiles very privacy-sensitive data and these profiles might even have value in the advertising market, for instance.
On one hand, disgruntled employees or external attackers might attempt to steal it for profit or out of malice.
On the other hand, the supplier could seek subsidiary revenues by selling this data himself.
Depending on the local jurisdiction, this might even be legal.

The important point is, that currently there are no reliable, technical measures in place to prevent abuse of consumption profiles.
Merely organizational measures, policies or laws sanction the abuse of privacy related data but require a trace or proof of abuse and do not prevent it in the first place.

\subsection{Problem statement}
\label{problemstatement}

The problem we tackle in this paper is to enable suppliers to do billing using Smart Metering data without actually receiving privacy related data.

\subsubsection{Supplier's requirements}
\label{supplier_req}
The supplier's requirement regarding consumption profiles is the ability to reliably use the data in the consumption profile to calculate the customer's bill for received electricity.
The consumption profile $V$ is a vector of $n$ values $v_i$ that represent the amount of utility used in the interval $i$ of one day.
The time-of-use tariff $T$ is a vector of $n$ $t_i$ where interval $i$ is priced with $t_i$. $t_i$ and $v_i$ are integers.  
Then the formula for calculating the time-of-use price for consumption of one day is
$$ P(V,T) = \sum_{i=0}^{n}{t_i * v_i} $$

It is crucial for the supplier that the consumption profiles are accurate and trustworthy.
Clearly, a customer might be inclined to report lower consumption than actually consumed, because it lowers his bill.
Therefore the Smart Meters are equipped with the TPM in order to ensure that the reported consumption profiles are trustworthy and reliable.

\subsubsection{Customers' requirements}
\label{customer_req}
In addition to the requirements of traditional metering (accuracy of the bill), a customer of Smart Metering is concerned about his privacy.
The less information is leaked by the customer, the better for him.
We strive for ideal privacy, i.e. the view of the consumption profile by the supplier is indistinguishable from a uniformly chosen consumption profile with the same price, i.e. supplier obtains no additional information to the price.

\subsubsection{Infrastructure constraints}
\label{infrastructure_req}


A major constraint for the infrastructure investments in Smart Metering is cost.
Suppliers have to replace conventional meters in every household with a new Smart Meter.
This is a significant amount of money for a complete roll-out even for a utilities' provider.
Therefore every technology built into a Smart Meter faces scrutiny w.r.t. to costs.

This also includes the security measures like TPMs and secure storage.
The development and verification of a secure TPM is very expensive and therefore it is common practice to keep its functionality minimal.
One naive approach to privacy-preserving billing of consumption profiles would be to calculate the price in the TPM itself.
But this would require that tariff information are retrieved and verified by the TPM.
In turn, this would require adding an input communication channel and module to the TPM and would consequently increase the costs for building and verifying the TPM considerably.

\subsubsection{Legal constraints}
\label{legal_req}
Depending on the jurisdiction, metering can be subject to legal requirements. In Germany, for instance, metrology laws~\cite{eichg} govern require a certain degree accuracy of the meter and measurements and the tamperproofness of the meter. Privacy laws~\cite{datenschutzg} require the confidentiality of readings to protect consumers' privacy.
We translate this into the technical requirements of Smart Meter integrity and integrity and confidentiality of consumption profiles on the wire and in computer systems.

\section{Pedersen Commitments}
\label{cryptoprimitives}

The core of our proposed solution (which we present in Sec.~\ref{sec:private_billing_protocol}) relies on Pedersen Commitments~\cite{Ped}.
In this section we briefly introduce the basics of this cryptographic method.
For further information on the scheme please refer to~\cite{Ped}.

A commitment is a cryptographic tool with two functions:

\begin{itemize}

\item {\em Commit($x$, $r$)} $\longrightarrow c$ takes as input a value $x$ and a random number $r$.
As output it produces the commitment $c$.

\item {\em Open($c$, $x$, $r$)} $\longrightarrow \top/\bot$ takes as input a commitment $c$, a value $x$ and a random number $r$.
It outputs $\top$, if $c$ is indeed a commitment to $x$ and $\bot$, if not.

\end{itemize}

Commitments have two security properties:

\begin{itemize}

\item {\em Secret}: Given $c$ it is hard to compute $x$.

\item {\em Binding}: Given $c$, $x$ and $r$ it is hard to compute an $x' \neq x$ and $r'$, such that $Open(c, x', r') = \top$, i.e. $c$ is a commitment for $x'$ as well.

\end{itemize}

They are used in the following way:
Alice chooses a value $x$.
She computes a commitment $c$ and sends it to Bob.
Now, Alice and Bob may, for example, engage in some computation that depends on Alice's input $x$, but where Alice may no longer change her mind.
Alice opens her commitment and shows that everything was indeed computed according to the value $x$ she choose at the beginning.

A typical example is fair coin flip.
Alice chooses a random $s$ and sends the commitment $c$ of $s$ to Bob.
Bob chooses a random number $t$ and sends it to Alice.
Alice now opens her commitment.
The fair coin flip is $x = s \oplus t$ (where $\oplus$ denotes ``exclusive-or'').
If the commitment was not secret, Bob could choose $x$.
If the commitment was not binding, Alice could choose $x$.

Pedersen commitments operate over a group $\mathbb{G}$.
This group $\mathbb{G}$ can be the same elliptic curves as used of EC-DSA in the secure hardware of the Smart Meter.
Let $g$ and $h$ be two generators of $\mathbb{G}$.
Pedersen commitments are computed as follows:

\newcommand{\entspricht}{\mathrel{\widehat{=}}}
\newcommand{\isequal}{\stackrel{?}{=}}

\begin{itemize}

\item {\em Commit($x$, $r$)}:
\[
c = g^x h^r
\]

\item {\em Open($c$, $x$, $r$)}:
\[
c \stackrel{?}{=} g^x h^r 
\entspricht c \isequal Commit(x,r)
\]

\end{itemize}

The proofs of their security properties can be found in~\cite{Ped}.

Pedersen commitments have another very useful property we exploit in this paper.
They are homomorphic, i.e. a multiplication of two commitments results in a commitment to the sum of their committed values.
\[
Commit(x, r) Commit(y, s) = Commit(x + y, r + s)
\]

A commitment can also be multiplied by a plain factor $y$
\[
Commit(x, r)^y = Commit(x y, r y)
\]

Note that both operations change the commitment $c$, such that the binding security property is not violated.
Instead one needs to open with the new input values of the commitment.

\section{The private billing protocol}
\label{sec:private_billing_protocol}
In this section we describe our privacy-preserving Smart Meter billing protocol.
First we give a very abstract description in Section ~\ref{abstractsolution}, then we provide the full specification in Section ~\ref{concretesolution} and provide a security analysis in Section ~\ref{analysis}.

\subsection{Components and specification}
\label{abstractsolution}
The main idea of our approach is that the plaintext consumption profiles never leave the household, but only after they have been processed by a pseudo-random one-way function.
Therefore, ideal privacy is preserved.
We propose to introduce a privacy component (PC) into the communication link of the Smart Meter and the supplier's back-end system.
Its objective is to intercept reports of consumption profiles and to let only processed information pass-through.
The PC is invisible to the SM and only the supplier will notice it: The PC directly interacts with the supplier's systems and consumption profiles will look different if a PC is used.
This setup is illustrated in Figure ~\ref{fig:proposedsolution}. 

\begin{figure}
\includegraphics[width=\textwidth]{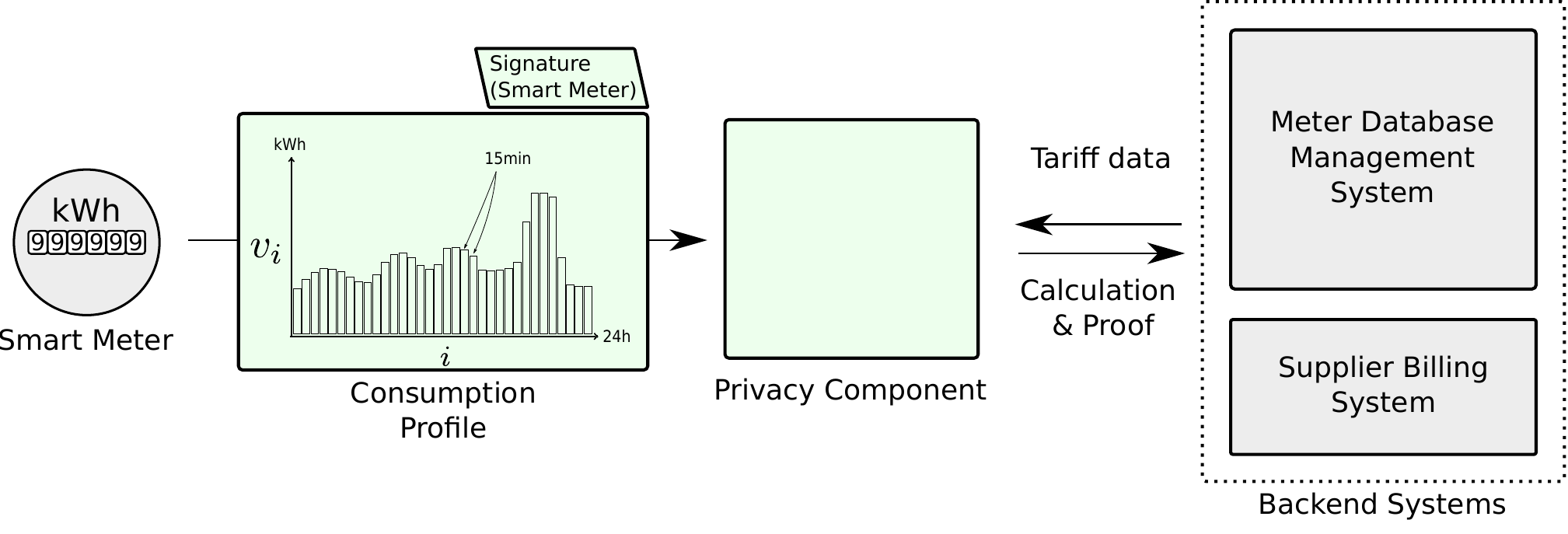}
\caption{Setup of proposed solution with intermediate privacy component}
\label{fig:proposedsolution}
\end{figure}

The major difference to a standard Smart Metering setup is that the price function $P(V,T)$ is not calculated at the supplier's system.
It is calculated in the PC which is supposed to be located in the household. 
For this, the PC intercepts the consumption profile and signed commitments sent to it by the Smart Meter 
and removes the plaintext consumption profile.
Then the PC obtains the tariff information from the supplier and calculates the bill with the original consumption profile.
It then presents the invoice, the signed commitments and a Zero Knowledge Proof to the supplier who verifies the bill's validity using the homomorphic property of the used commitment scheme:
The supplier determines the correctness of the bill by appropriate operations on the received signed commitments and the tariff.
If the commitments can be verified on the presented bill, then the presented bill is trustworthy and correct.
The homomorphic commitment scheme we use on the Smart Meter side is Pedersen Commitment~\cite{Ped} and is shortly outlined in Section ~\ref{cryptoprimitives}.

\subsection{Protocol specification}
\label{concretesolution}

\subsubsection{Initiation}
\label{initiation}
Initially, the SM and BS need to employ some signature scheme which allows the SM to secure the integrity of data sent to the BS. This is usually already the case with Smart Meters. They are either part of a PKI or both, the SM and the BS, have access to a symmetric key for a symmetric signing scheme. We denote such signing key with $\text{Sign}_\text{priv}$.

Secondly, the TPM in the Smart Meter must be able to use the Pedersen Commitment scheme (see Section ~\ref{cryptoprimitives}) over elliptic curves with public generators $g$ and $h$.

How keys (or the public parameters, such as the generators) are distributed to Smart Meters is beyond the scope of this paper, but it is already common practice in on-going Smart Meter deployments.

\subsubsection{Consumption profile reporting and invoice calculation}
\label{calculation}

\begin{figure}
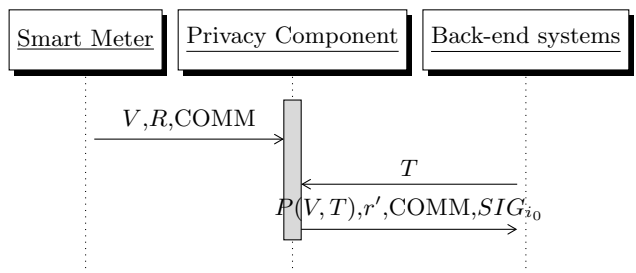

\centering

 \begin{sequencediagram}

    \newinst{sm}{Smart Meter}
    \newthread{pc}{Privacy Component}
    \newinst{bs}{Back-end systems}

  \mess{sm}{$V$,$R$,$\text{COMM}$}{pc}

  \mess{bs}{$T$}{pc}
  \mess{pc}{$P(V,T)$,$r'$,$\text{COMM}$,$SIG_{i_0}$}{bs}

  \end{sequencediagram}
\caption{Communication sequence}
\label{fig:sequence}
\end{figure}

Figure ~\ref{fig:sequence} illustrates the communication that takes place between the different actors and the following enumeration of steps describes the protocol in detail:
\begin{enumerate}
\item The SM prepares a consumption profile to be reported to BS. The profile basically consists of a vector of consumption values  
$V=\{v_{i_0},v_{i_1},...,v_{i_n}\}$. $v_{i_k}$ represents the energy consumption of the household in the interval $i_k$. $i_k$ stands for the interval number, 
incremented since a fictive first interval, analogous to the definition of the UNIX time stamp. 
\item For values in $V$ the SM now creates commitments. The commitment of $v_{i_k}$ is $Comm_{i_k}=Commit(v_{i_k},r_{i_k}) $. Where $Commit(a,r)$ stands for the Pedersen Commitment  of $a$ and a random value $r$ with the generators $g$ and $h$ known to the TPM and the BS.
\item \label{sigi} Now, the SM would like to send the data to the BS. Before that can happen, it protects the data from being manipulated on the way. It creates a signature $SIG_{i_0}$ (with its signing key $\text{Sign}_\text{priv}$) over $(i_0,\text{COMM})$ and sends it together with the vector $V$, the vector $\text{COMM} = \{Comm{i_0},Comm_{i_1},...,Comm_{i_n}\}$ and the vector $R = \{r_{i_0},r_{i_1},...,r_{i_n}\}$ towards the supplier's back-end system.
\item \label{intercept} The PC intercepts all the traffic between the SM and the BS.
\item \label{tariff} The PC obtains the tariff vector $T=\{t_{i_0},t_{i_1},...t_{i_n}\}$ from BS and performs the following two calculations:
\begin{enumerate}
\item $P(V,T)=\sum_{k=i_0}^{i_n} v_k*t_k $  This is the actual price the customer has to pay 
for the reporting period represented by $V$. 
\item In addition it also calculates $r'$ from the vector $R$ it intercepted in step ~\ref{intercept}:
$r\prime=\sum_{k=i_0}^{i_n} r_k * t_k $ 
\end{enumerate}
\item \label{proof} The PC now sends $P(V,T)$, $r'$, $\text{COMM}$, $SIG_{i_0}$ to BS and has finished its work.
\end{enumerate}

\subsubsection{Verification}
\label{verification}
These are the steps  performed by the BS subsequently  to the reporting in order to verify that the $P(V,T)$ was correctly calculated:

First of all, the BS verifies that the signature $SIG_{i_0}$ over $i_0$ and the commitments is intact which means that the commitments it received has been signed by the TPM and stands for the next vector $V=\{v_{i_0},v_{i_1},...,v_{i_n}\}$ starting from $i_0$.

BS now computes $\text{COMM}_{\text{Tariff}}$ with the $Comm_i$ it received in step ~\ref{proof} and the tariff vector $T$ that it made available to PC in step ~\ref{tariff}: 
\begin{equation*}
\text{COMM}_{\text{Tariff}}=\prod_{k=i_0}^{i_n} Comm_{k}^{t_k}
\end{equation*}
Whether the $P(V,T)$ sent by the PC has been calculated truthfully with the correct $v_i$ and $t_i$ can now be verified by opening the aggregated commitment $\text{COMM}_{\text{Tariff}}$. For that, the BS uses $P(V,T)$ and the aggregate random value $r'$ that it received in step ~\ref{proof}.
\begin{equation*}
\begin{split}
Open(\text{COMM}_{\text{Tariff}},P(V,T), r\prime)\\
= \text{COMM}_{\text{Tariff}} \isequal Commit(P(V,T),r\prime) 
\end{split}
\end{equation*}

\subsection{Analysis}
\label{analysis}
\begin{thm}
Our private billing protocol is complete, sound and honest-verifier zero-knowledge.
\end{thm}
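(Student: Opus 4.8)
The plan is to treat the protocol as a (non-interactive) proof that the value $P(V,T)$ reported by the privacy component is the correct weighted sum, with weights $T$, of the values committed in $\text{COMM}$, and to derive each of the three properties from the facts collected in Section~\ref{cryptoprimitives} (homomorphism and \emph{perfect} hiding of Pedersen commitments, and binding, i.e.\ hardness of computing $\log_g h$) together with unforgeability of $\text{Sign}_\text{priv}$. \emph{Completeness} is then just arithmetic: in an honest run the homomorphic identities give
\begin{align*}
\text{COMM}_{\text{Tariff}} &= \prod_{k=i_0}^{i_n} Comm_k^{t_k} = \prod_{k=i_0}^{i_n}\left(g^{v_k} h^{r_k}\right)^{t_k} \\
&= g^{\sum_k v_k t_k}\, h^{\sum_k r_k t_k} = g^{P(V,T)} h^{r'} = Commit(P(V,T), r'),
\end{align*}
so $Open(\text{COMM}_{\text{Tariff}}, P(V,T), r') = \top$, and the signature check passes because $SIG_{i_0}$ is forwarded untouched.

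For \emph{soundness} I would argue that a cheating privacy component cannot make BS accept a price other than the one determined by the commitments the Smart Meter actually produced (the further guarantee that those commitments carry the true readings is the external TPM assumption, not part of the proof-system claim). Unforgeability of $\text{Sign}_\text{priv}$ forces the vector $\text{COMM}$ reaching BS to be exactly the one signed, or else we extract a forgery; by binding each $Comm_k$ pins a unique value $v_k$, and by the homomorphism $\text{COMM}_{\text{Tariff}}$ is a commitment to $P(V,T)=\sum_k v_k t_k$ with randomness $r'=\sum_k r_k t_k$. If BS accepted a pair $(\widetilde P,\widetilde r')$ with $\widetilde P \neq P(V,T)$, then $(\widetilde P,\widetilde r')$ and $(P(V,T), r')$ would be two distinct openings of $\text{COMM}_{\text{Tariff}}$, contradicting its binding property; concretely $g^{\widetilde P} h^{\widetilde r'} = g^{P(V,T)} h^{r'}$ reveals $\log_g h = (P(V,T)-\widetilde P)(\widetilde r'-r')^{-1}$ when $\widetilde r'\neq r'$, and the case $\widetilde r'=r'$ is immediate. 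Hence acceptance implies $\widetilde P = P(V,T)$.

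For \emph{honest-verifier zero-knowledge} I would build a simulator that, given only the tariff $T$ and the price $p=P(V,T)$ — exactly what BS is entitled to learn — outputs a transcript identically distributed to the verifier's real view. The simulator samples a uniformly random profile $V'$ with $\sum_k v'_k t_k = p$, samples fresh uniform randomness $R'$, sets $\text{COMM}'=\{Commit(v'_k,r'_k)\}_k$ and $r'^{*}=\sum_k r'_k t_k$, and attaches the signature (granting the simulator oracle access to the honest Smart Meter's signing functionality, or taking the signed commitment vector as common input). By perfect hiding each $Commit(v'_k,r'_k)$ is uniform in $\mathbb{G}$ and independent across $k$, so $\text{COMM}'$ has exactly the distribution of the real $\text{COMM}$; and because the verification equation forces $r'$ to be the unique exponent with $\text{COMM}_{\text{Tariff}}=g^{p}h^{r'}$, the simulated $r'^{*}$ is the same deterministic function of $(\text{COMM}',p)$ that the real $r'$ is of $(\text{COMM},p)$. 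The two transcripts therefore coincide as distributions, which also substantiates the ``ideal privacy'' claim of Section~\ref{customer_req}.

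I expect the zero-knowledge step to be the main obstacle — not because the argument is deep, but because it requires care in two places: fixing precisely which objects constitute the verifier's view (the individual $Comm_k$, the aggregate opening $r'$, and the signature $SIG_{i_0}$), and handling the signature cleanly, since the simulator cannot forge it and the reduction must instead invoke a signing oracle or move the signed commitments into the statement. Once one notices that $r'$ is fully determined by $\text{COMM}$ and $p$ and that Pedersen commitments are \emph{perfectly} hiding, the indistinguishability collapses to an equality of distributions, so essentially all of the subtlety is in setting up the model correctly.
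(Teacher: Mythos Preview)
Your proposal is correct and matches the paper's approach on all three parts: completeness via the homomorphic identity, soundness by reduction to Pedersen binding after signature unforgeability fixes $\text{COMM}$, and honest-verifier zero-knowledge via a simulator built on perfect hiding. If anything, your zero-knowledge argument is tighter than the paper's --- the paper loosely asserts that the $Comm_i$ and $r'$ are ``uniformly and independently distributed,'' whereas you correctly note that $r'$ is the unique exponent determined by $(\text{COMM},T,p)$ through the verification equation, and your treatment of the signature (signing-oracle access, or folding the signed commitment vector into the statement) is cleaner than the paper's suggestion of ``inverting the signature verification operation on a random signature.''
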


\begin{proof}
For completeness, i.e. if the PC truthfully computes the tariff, then the BS accepts, we observe the following equation:

\begin{equation*}
\begin{split}
  & Commit(P(V,T), r\prime) \\
= & Commit(\sum_{k=i_0}^{i_n} t_k v_k, \sum_{k=i_0}^{i_n} t_k r_k)\\
= & \prod_{k=i_0}^{i_n} Commit(t_k v_k, t_k r_k) \\
= & \prod_{k=i_0}^{i_n} Commit(v_k, r_k)^{t_k} \\
= & \prod_{k=i_0}^{i_n} Comm_k^{t_k} \\
= & \text{COMM}_{\text{Tariff}}
\end{split}
\end{equation*}

It follows that $\text{COMM}_{\text{Tariff}}$ is a Pedersen commitment for $P(V,T)$ with the random number $r'$.

For soundness we prove that if the PC does not truthfully compute the tariff, then the BS must reject.
That is given $v_i$, the PC cannot forge a view $Comm_i$, $P(V, T)$ and $r'$ of the protocol that is accepted by the BS.

We will prove by contradiction.
First, observe that we assume that the PC cannot forge the $Comm_i$ commitments, since they are signed by the TPM.
Second, as follows from completeness, the subsequently computed $\text{COMM}_{\text{Tariff}}$ is a Pedersen commitment to $P(V,T)$ and $r'$.
If, the PC could present $P'(V,T) \neq P(V,T)$ and $r''$, such that $\text{COMM}_{\text{Tariff}} = Commit(P'(V,T), r'')$ is opened correctly, then this would be a contradiction to the binding property of Pedersen commitments as established in \cite{Ped}.

For honest-verifier zero-knowledge, we present a simulator of the view of the BS given only its input and output.
The values $Comm_i$ and $r'$ from the view of the protocol are uniformly and independently distributed in $\mathbb{Z}^*_q$.
The tariff $P(V, T)$ is public output of the protocol (and input to the verification operation).

The signature $Sig_{i_0}(Comm_1, \ldots, Comm_n)$ of the TPM cannot be trivially simulated, since the BS only holds the public key.
Nevertheless, since it is only a signature of randomly distributed values, we could simulate it by inverting the signature verification operation on a random signature.
This rather strange simulation is an artifact of our unconventional setup of proving having the PC compute on input from another party -- the TPM.
In a strict sense, the signature is not part of the Zero Knowledge Proof, since it is computed by the TPM and not the PC.

\end{proof}

\section{Implementation and Evaluation}
\label{evaluation}

In this section, we give details on our prototypical implementation, show how our component can be integrated in real world Smart Meter deployments, and discuss how our solution fulfills the functional and security requirements which were identifies in Section~\ref{problemstatement}. 

\subsection{Implementation of the core algorithm}\label{implmentation}

We implemented an exemplary system to identify load on the respective hardware systems during the execution of our protocol. 
For this purpose, we modeled the SM, the PC and the 
BS in Java as much as necessary to execute our protocol. The SM creates a profile, creates respective commitments and signs commitments. Transport of data is simplified by method calls to the respective destination component. Obtained data shows the load on the CPU  in the different components. Figure ~\ref{table:cpuprotocol} shows execution times of the different components (SM,PC,BS) of the reporting and verification part of the protocol. Execution times were measured on a Intel(R) Core(TM) i5 CPU M540 at 2.53GHz on a OpenJDK Runtime Environment (IcedTea6 1.8.1) on a Ubuntu 10.4 system.


\begin{figure}
\includegraphics[width=\textwidth]{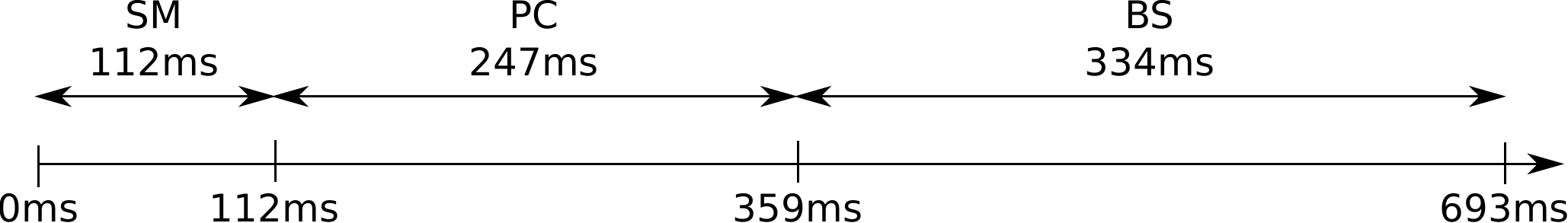}
\caption{Execution times of the protocol (reporting and verification stages) at the different entities}
\label{table:cpuprotocol}
\end{figure}

From the numbers in the Figure ~\ref{table:cpuprotocol} one can see that most time is spent in the BS and the PC. The SM does not spend excessively much time with creating commitment values. 

Although the hardware of the Smart Meter, respectively the TPM performing the actual calculations, is usually several scales inferior to our test system we believe that the SM is able perform its part of the protocol in a timely manner \cite{standardjavacard}.
After all, irrespective of other constraints, it has one day before it needs to perform the next protocol run.
The PC can be realized either by a stand-alone hardware component or by software running on existing hardware (router, Wifi access points). Therefore, the hardware can be chosen appropriately to stand up to the requirements of its part of the protocol.

However, one has to take into account that the supplier's systems will need to participate in several thousand instances of this protocol per day, one for every associated household. If we assume that the supplier buffers the received verification data of concurrent protocol instances it can spread verification (where all of its time is spent) over the course of a whole day. Then, one such system (with our hardware) should be able to handle approx. 25000 protocol instances per day. This could be further increased by only verifying a random choice of reported prices. This shows, that a supplier should be able to implement our protocol for millions of users with negligible resources.

\subsection{Integration in real world scenario}\label{realworld}
Regarding the integration of our approach into current Smart Meter setups we have identified how our approach could use existing protocols of Smart Meter reporting. 
From \cite{smartmeterstandardcomparison} we identified two relevant application layer protocol specifications for SM to BS reporting: The universal DLMS/COSEM standard suite (IEC 62056 / EN 13757-1)~\cite{dlmscosem} and the simple Smart Metering Language~ (SML)~\cite{sml} specification. 

The Smart Meter Language (SML) describes an application and presentation layer and Smart Meters operate either in a push (SM initiates) or pull (SM reacts) scenario. 
All data is encoded in either an SML request or SML response message. Encryption of SML messages on the application or presentation layer is not part of the SML specification.

DLMS/COSEM is an application layer protocol. DLMS specifies how one can talk about energy metering objects. Energy Metering objects are described by the COSEM specification.The standard does not dictate specific transport protocols. The Smart Meter operates as server and communication follows the pull-strategy from the view of the BS system. Read and write access is realized by transmitting respective COSEM objects in APDUs (Application Protocol Data Units). The server's application context determines whether APDUs are encrypted.

How the privacy component can be embedded in environments employing SML or DLMS/COSEM depends on a multitude of factors: The actual protocols used on the network/transport layers, the used push/pull strategy as well whether encryption is used. Those factors determine whether the PC acts as transparent or visible proxy, how it intercepts messages and whether it needs key material to decipher messages.

For SML we see a simple straightforward solution how to implement the privacy component on the application layer:
In SML actual consumption profiles are sent as table with one row for each recorded value. The columns can describe one entry further with entries like time of recording, error conditions and so on. The whole table but also individual columns of the table can be signed which would allow us to fit our protocol into SML messages easily: For every query (pull-scenario) of consumption values the SM would answer with a table with the columns: $i$,$v_i$,$Comm_i$,$r_i$. The SM would sign all columns independently but the PC would intercept the SML response and delete the columns $v_i$ and $r_i$ from the table and insert $P(V,T)$,$r'$ and $\text{COMM}$ into the message. $Sig_{i_0}$ would be represented by the $i$'s and $Comm_i$'s columns' intact signatures. This only requires, that the part of the Smart Meter responsible for creating signatures also create commitments. The BS system would notice that columns $v_i$ and $r_i$ are missing and would therefore switch into a mode where it communicates with a privacy component and performs the verification part of our protocol. If a privacy component was not employed the whole table would be transferred intact and the BS system would perform its normal operation and store the plaintext values in its database.
 
For DMLS/COSEM the approach would work analogously but with COSEM objects and properties instead of SML tables. However, in DLMS/COSEM encryption could make it impossible for the PC to understand the intercepted APDUs. In such a case, the Smart Meter would either need to be reconfigured not to use encryption or to use the public key of the PC instead of the supplier's public key.
This would allow the PC to read and manipulate the APDU and possibly re-encrypt it for the supplier with the supplier's public key.

\subsection{Fulfillment of stakeholder's requirements}
In Section~\ref{problemstatement} we listed requirements of the different stakeholders for Smart Metering.
We will show in this Section how our approach fulfills these requirements.

\begin{itemize}

\item In Section ~\ref{supplier_req} we mentioned that the supplier's requirement is the trustworthiness of reported consumption values.
Our protocol fulfills this by providing a trustworthy price instead of individual consumption values. 
We have given a soundness proof of our Zero Knowledge proof for the correct calculation of the price.

\item In Section ~\ref{customer_req} we also stated that the customer wants privacy-aware billing.
Our approach achieves a privacy-aware billing with ideal privacy as the consumption profile never leaves the household unprocessed, but only the price.
We have proven the zero knowledge property of our Zero Knowledge Proof, i.e. the supplier will learn nothing, but the price.

\item The infrastructure requirement listed in Section ~\ref{infrastructure_req} is a low-cost Smart Metering solution.
Our approach achieves this by only minimal changes to the software of Smart Meters and supplier's back-end system.
The privacy component itself is simple and untrustworthy.
It could therefore be implemented in inexpensive hardware or even in software.

\item  Finally, in Section ~\ref{legal_req} we mention several requirements regarding the tamperproofness, accuracy and confidentiality of the Smart Meter. As our approach does not interfere with the Smart Meter's normal operation accuracy and tamperproofness of the Smart Meter are not changed.
We conform to any regulation we are aware of.

In addition, the supplier might benefit from the use of a privacy component as well, as less privacy-related data has to be stored in his systems for legal retention periods. The supplier needs to store all data that he receives from the PC for being able to reproduce invoice calculation for a certain retention time but that data is not privacy-related. The commitment values do not disclose useful information and the only privacy-related data item is the final price. This reduces the supplier's need for special security measures of his systems against internal or external attackers.
\end{itemize}
Based on the discussion above, it is save to
conclude that our approach fulfills all identified requirements (see Section ~\ref{problemstatement}) for a privacy-respecting billing of Smart Metering consumption profiles. Furthermore, as shown in Sections~\ref{implmentation} and~\ref{realworld}, an implementation of our algorithm is suitable to be deployed on a large scale and fits well with existing standards and infrastructures. 

\section{Related work}
\label{relatedwork}

\paragraph{General references concerning security aspects of smart metering:} 
Abstract predictions about security and privacy challenges potentially coming with the evolution of the current grid to the Smart Grid are described in \cite{johnsjawurek}
while \cite{trustsecprivforsmartgrid},\cite{secprivchallengessg} and \cite{smartgridprivacy} give more information on the topics of security, privacy and trust in Smart Grids/Smart Metering and Advanced Metering Infrastructures (AMI).

In \cite{smartgridprivacy} Ontario's (Canada) Information and Privacy Commissioner provides an overview of what the Smart Grid is, how it will affect electricity consumers and how their privacy might be at risk by the Smart Grid and Smart Metering. Furthermore she promotes the idea of building privacy into the Smart Grid from the start.

\paragraph{Privacy aspects of smart meter-based billing:}
In \cite{GarJac} a privacy-preserving detection algorithm for leakages in electricity distribution has been proposed.
By aggregation across several Smart Meters the developed algorithm protects individual meter readings while allowing grid operators to detect illegitimate/unknown load.
Their approach does not allow individual billing, yet this is the main application of our paper.

Furthermore, in \cite{privacymodel} a model for measuring privacy in Smart Metering is developed and subsequently two different solutions to privacy are presented: A Trusted Third Party-based approach, where individual consumption profiles are aggregated at the third party and only sums are communicated to the supplier. The other approach attempts to mask consumption profiles by adding randomness to the actual profile with an expectation of the random distribution of zero. In contrast to our solution, both of their approaches cannot handle billing of time-of-use tariffs but only provide either sums or not-accurate profiles. Furthermore, our approach does not require a trusted third party and provides exact results for every computation (as required by some legislations). 

Finally, in~\cite{privconcept} also a twofold approach is presented: The first solution employs a sophisticated Trusted Platform Module (TPM) in the Smart Meter to obtain signed tariff data from the supplier and calculate a trustworthy bill. The second solution makes use of the electrical grid infrastructure as a third party to anonymize up-to-date consumption values sent out constantly by Smart Meters. Our approach can be distinguished as it only addresses billing but only requires a very simple TPM that creates commitments.

\paragraph{Pedersen commitments:}
Due to their homomorphic properties Pedersen commitments are an effective means to verify the correctness of statistics computation.
In~\cite{ThoYao} it has been applied in the outsourced database setting.
Statistics and dot product computation can be useful for in many application areas.
An example from the database community again is privacy-preserving data mining~\cite{VaJZhu}.
An example from the business software community is collaborative benchmarking~\cite{Ker}.

Our work is the first in providing a very high degree of privacy for customers by not disclosing consumption profiles in time-of-use Smart Meter billing scenarios.

\section{Future work}
\label{futurework}
Dynamic, time-depended billing is only one application of fine-grained consumption data. In addition, 
a profile of a household's energy consumption can be utilized by the supplier to create predictions of this household's energy demand in the future. Our proposed solution does not cover this usage of consumption profiles. 
Realizing a privacy friendly method for calculating such predictions is subject to future research. 
However, one must realize that privacy and the ability to create predictions potentially contradict each other in their nature and this conflict in the field of Smart Metering should also be investigated further.


\section{Conclusion}
\label{conclusion}
In this paper we have proposed a protocol for privacy-preserving reporting of consumption profiles in a Smart Metering scenario by the use of a plug-in component. We have identified and analyzed the requirements of different stakeholders. Based on this analysis, we devised a billing scheme  which allows privacy-related consumption profiles to remain within the household while preserving provable correctness of the billable amounts. 
The privacy sensitive data therefore is not susceptible to interception in transit or leakage in the supplier's back-end system. 

We have provided the specification for the utilized components, for the introduction into a traditional Smart Metering setup, and for the communication and calculation during the three protocol stages (initialization, reporting, verification). After proving the soundness, completeness and zero-knowledge property of the verification, we investigated the execution times of our prototypical implementation and showed that it is a viable solution for Smart Metering hardware. 
Finally, we discussed how our protocol could be executed using existing Smart Metering reporting specifications and showed that our approach fulfills the previously identified stakeholder's requirements.

Our protocol is one step towards the idea of building privacy into the Smart Grid~\cite{smartgridprivacy}. By preserving customer privacy we mitigate trust issues that privacy experts, the media and the public have raised about the privacy implications of Smart Metering.

\bibliographystyle{abbrv}
\bibliography{privatebilling}

\begin{thebibliography}{10}

\bibitem{smartmetermap}
Smart metering projects map.
\newblock
  \url{http://maps.google.com/maps/ms?ie=UTF8\&oe=UTF8\&msa=0\&msid=1155193110%
58367534348.0000011362ac6d7d21187}, 2010.

\bibitem{dlmscosem}
D.~U. Association.
\newblock {DLMS/COSEM}: Device language message specification/companion
  specification for energy metering.
\newblock \url{http://www.dlms.com/organization/index.html}, 2010.

\bibitem{usemodekitchen}
G.~Bauer, K.~Stockinger, and P.~Lukowicz.
\newblock Recognizing the use-mode of kitchen appliances from their current
  consumption.
\newblock In {\em EuroSSC}, pages 163--176, 2009.

\bibitem{standardjavacard}
P.~Bichsel, J.~Camenisch, T.~Gro{\ss}, and V.~Shoup.
\newblock Anonymous credentials on a standard java card.
\newblock In {\em ACM Conference on Computer and Communications Security},
  pages 600--610, 2009.

\bibitem{privacymodel}
J.-M. Bohli, O.~Ugus, and C.~Sorge.
\newblock A privacy model for smart metering.
\newblock In {\em Proceedings of the First IEEE International Workshop on Smart
  Grid Communications (in conjunction with IEEE ICC 2010)}, 2010.

\bibitem{smartgridprivacy}
A.~Cavoukian, J.~Polonetskyand, and C.~Wolf.
\newblock Smart privacy for the smart grid: embedding privacy into the design
  of electricity conservation.
\newblock {\em Identity in the Information Society}, 3(2):275--294, August
  2010.

\bibitem{smartmeterstandardcomparison}
K.~D. Craemer and D.~Geert.
\newblock Analysis of state-of-the-art smart metering communication standards.
\newblock In {\em Young Researcher Symposium edition 2010, Leuven, Belgium},
  2010.

\bibitem{sml}
D.~M.~W. EMSYCON~GmbH.
\newblock Smart meter language specification ({SML}).
\newblock January 2008.

\bibitem{trustsecprivforsmartgrid}
M.~O. G.~Lenzini and W.~Teeuw.
\newblock Trust, security, and privacy for the advanced metering
  infrastructure.
\newblock Technical report, July 2009.

\bibitem{GarJac}
F.~Garcia and B.~Jacobs.
\newblock Privacy-friendly energy-metering via homomorphic encryption.
\newblock In {\em Proceedings of the 6th International Workshop on Security and
  Trust Management}, 2010.

\bibitem{netherlandsstop}
W.~Heck.
\newblock Smart energy meter will not be compulsory.
\newblock NRC Handelsblad, April 2009.
\newblock
  \url{http://www.nrc.nl/international/article2207260.ece/Smart_energy_meter_w%
ill_not_be_compulsory}.

\bibitem{ukprivacy}
A.~Jamieson.
\newblock Smart meters could be 'spy in the home'.
\newblock Telegraph (UK), October 2009.
\newblock
  \url{http://www.telegraph.co.uk/finance/newsbysector/energy/6292809/Smart-me%
ters-could-be-spy-in-the-home.html}.

\bibitem{johnsjawurek}
M.~Jawurek and M.~Johns.
\newblock Security challenges of a changing energy landscape.
\newblock In {\em ISSE 2010, Securing Electronic Business Processes}, pages
  249--259. Vieweg+Teubner, 2010.

\bibitem{Ker}
F.~Kerschbaum.
\newblock Practical privacy-preserving benchmarking.
\newblock In {\em Proceedings of the 23rd IFIP International Information
  Security Conference}, volume 278 of {\em IFIP International Federation for
  Information Processing}, pages 17--31, Boston, MA, USA, 2008. Springer.

\bibitem{engw}
G.~Legislation.
\newblock Energiewirtschaftsgesetz, 2005.

\bibitem{eichg}
G.~Legislation.
\newblock Gesetz über das {M}eß- und {E}ichwesen ({E}ichgesetz), 2008.

\bibitem{datenschutzg}
G.~Legislation.
\newblock Bundesdatenschutzgesetz, 2009.

\bibitem{secprivchallengessg}
P.~McDaniel and S.~McLaughlin.
\newblock Security and privacy challenges in the smart grid.
\newblock {\em IEEE Security and Privacy}, 7:75--77, 2009.

\bibitem{Ped}
T.~Pedersen.
\newblock Non-interactive and information-theoretic secure verifiable secret
  sharing.
\newblock In {\em Advances in Cryptology -- Proceedings of CRYPTO}, 1991.

\bibitem{privconcept}
R.~Petrlic.
\newblock A privacy-preserving concept for smart grids.
\newblock In {\em Sicherheit in vernetzten Systemen: 18. DFN Workshop}, pages
  B1--B14. Books on Demand GmbH, 2010.

\bibitem{appliancemonitoring}
F.~Sultanem.
\newblock Using appliance signatures for monitoring residential loads at meter
  panel level.
\newblock {\em Power Delivery, IEEE Transactions on}, 6(4):1380 --1385, oct
  1991.

\bibitem{ThoYao}
B.~Thompson, S.~Haber, W.~G. Horne, T.~Sander, and D.~Yao.
\newblock Privacy-preserving computation and verification of aggregate queries
  on outsourced databases.
\newblock In {\em Privacy Enhancing Technologies}, pages 185--201, 2009.

\bibitem{VaJZhu}
J.~Vaidya, Y.~M. Zhu, and C.~W. Clifton.
\newblock {\em Privacy Preserving Data Mining (Advances in Information
  Security)}.
\newblock Springer-Verlag New York, Inc., Secaucus, NJ, USA, 2005.

\end{thebibliography}

\end{document}